\theoremstyle{plain}
\newtheorem{mylem}{Lemma}
\theoremstyle{definition}
\newtheorem{mydef}{Definition}
\newtheorem{mythm}{Theorem}
\newcommand{\id}{\mathrm{id}}
\newcommand{\ket}[1]{\left| #1 \right \rangle}
\newcommand{\bra}[1]{\left \langle #1 \right|}
\newcommand{\eps}{\varepsilon}
\newcommand{\sysC}{\mathrm{C}}
\newcommand{\CtoCT}{\mathrm{C}\rightarrow\mathrm{CT}}
\newcommand{\map}{\mathcal{M}}
\DeclareMathOperator{\Pos}{Pos}
\newcommand*{\cC}{\mathcal{C}}
\newcommand*{\cI}{\mathcal{I}}
\newcommand*{\cM}{\mathcal{M}}
\DeclareMathOperator{\tr}{tr}
\definecolor{darkgreen}{rgb}{0,0.4,0}
\definecolor{orange}{rgb}{1,0.5,0}
\newcommand{\proj}[1]{\ket{#1}\!\!\bra{#1}}
\begin{document}

\title{Performance of stochastic clocks in the Alternate Ticks Game}
\author{Sandra Stupar}
\email{rankovic@itp.phys.ethz.ch}
\affiliation
{Institute for Theoretical Physics, ETH Zurich, 8093 Zurich, Switzerland}
\author{Christian Klumpp}
\affiliation
{Institute for Theoretical Physics, ETH Zurich, 8093 Zurich, Switzerland}
\author{Nicolas Gisin}
\affiliation
{Group of Applied Physics, University of Geneva, 1211 Geneva, Switzerland}
\author{Renato Renner}
\affiliation
{Institute for Theoretical Physics, ETH Zurich, 8093 Zurich, Switzerland}

\begin{abstract}
In standard quantum theory, time is not an observable. It enters as a parameter in the Schr\"odinger equation, but there is no measurement operator associated to it.  Nevertheless, one may take an operational viewpoint and regard time as the information one can read from clocks. The {\it Alternate Ticks Game}, introduced in [arXiv:1506.01373], is a completely operational means to quantify the accuracy of time scales generated by clocks. The idea is to count the number of ticks that two copies of a clock can produce until they run out of synchronisation. Here we investigate the performance of {\it stochastic clocks} in this game. These are clocks which are classical in the sense that they do not exploit quantum coherence. Our results support earlier conjectures that their accuracy grows  linearly in the size of the clockwork, measured in terms of the dimension of the associated Hilbert space. In particular, we derive explicit bounds on the accuracy of a natural class of stochastic clocks, the {\it stochastic ladder clocks}. 
\end{abstract}

\maketitle

Time enters as a parameter in the quantum-mechanical equations of motion. Treating time in this way is convenient in many cases, but it also bears conceptual problems. These become apparent  if one tries to reconcile quantum mechanics with other physical theories. Indeed, one talks about the ``problem of time'' when describing obstacles that arise in attempts to quantise general relativity~\cite{DeWitt,Kuchar,Isham}. The treatment of time as a parameter is also problematic from a thermodynamic point of view~\cite{ThermodynamicsOfTime}. An alternative could be to represent time in terms of an operator, analogously to the quantum-mechanical position or momentum operators, but this turned out to be impossible in general~\cite{Pauli,PauliTwo, BuschTimeOperator,PashbyThesis}. 

These problems may be avoided by an operational approach to time. According to this approach,  time is what one reads from a clock. In other words, time scales are generated by  clocks, which are modelled as physical systems. The approach has a long tradition and certainly played a key role in the development of relativity theory~\cite{EinsteinClock}. It is therefore to be expected that it can also teach us something about quantum mechanics and its relation to other physical theories, as well as about cosmology. For example, the proposal to describe time as correlations between a clock and other systems, and extensions thereof,~\cite{PageWootters,Wootters,Mott,Briggs,Brout,GiovannettiQuantumTime,ChiaraVlatko} certainly fits into this operational picture. The operational approach has also inspired work in quantum information theory. An example is the question whether clock synchronisation schemes that use quantum communication have advantages over classical ones~\cite{PreskillClockSynchronisation,JozsaClockSynchronisation, Eddington}. Another one, related to quantum thermodynamics, is whether measuring time has a thermodynamical cost~\cite{PauliThesis, MischaRalph, ThermodynamicQClock, InflationUniverse}.

Clocks, regarded as actual physical systems, are obviously subject to  physical constraints, which may limit their accuracy. One of these constraints is that clocks have a finite size. Taking a quantum-information perspective, it is natural to quantify size in terms of the dimension~$d$ of the Hilbert space associated to the clockwork. This is made precise by the framework proposed in~\cite{ClockModel}, where clocks are modelled as general quantum systems that emit time information, the \emph{ticks}, to their environment.  

The accuracy of such tick-emitting clocks can be quantified in a purely operational manner in terms of the \emph{Alternate Ticks Game (AT~Game)}~\cite{ClockModel}. In this game, the aim is to set up two clocks in such a way that their ticks occur in an alternating order for as long as possible. The game stops as soon as one of the clocks emits two ticks in a row without a tick by the other clock in between. The score, $N$, of the game is the expected number of ticks emitted by each clock until this happens.  The score $N$ achieved by the best strategy is then a measure for how well the two clocks can be synchronised. To turn this into a measure for the \emph{accuracy} of a clock, one simply plays the AT~Game with two copies of the clock.

Here we focus on a particular type of clocks, called \emph{stochastic clocks}. These are, in some sense, the ``most classical'' clocks within the framework of~\cite{ClockModel}, in the sense that their internal state is assumed to decohere quickly (i.e., much faster than the time scale on which the clock operates).  As we will see, stochastic clocks exhibit some simple properties which make them particularly convenient to investigate. Specifically, they can be modelled as a  system with perfectly distinguishable and, in this sense, classical states. Furthermore, the sequence of states that a stochastic clock admits can be represented in terms of a Markov chain. 

Our main results  are  lower and upper bounds on the accuracy of stochastic clocks, measured in terms of the expected score~$N$ in the AT~Game that is achieved by the best strategy (Theorem~\ref{thm:GeneralClocksBound}).  The bounds depend on the size~$d$ of the clockwork. For stochastic clocks, $d$ corresponds to the number of classical states that the clockwork can admit. The bounds may be evaluated for specific clock constructions, such as \emph{stochastic ladder clocks}. We find that, for this type of clocks, the accuracy~$N$ grows linearly in~$d$ (Theorem~\ref{thm:TicksBound}).

Numerical results~\cite{ClockModel,ChrisThesis} suggest that general quantum clocks can achieve accuracies~$N$ that exceed this linear bound. That quantum coherence can indeed be exploited to build more accurate clocks has recently also been shown analytically for very large dimensions~\cite{RPaper}, although with respect to a different measure of accuracy (see Section~\ref{sec_discussion}  for more details).  It is therefore natural to conjecture that genuinely quantum clocks can outperform the best stochastic ones in the AT Game. The results presented here may be regarded as a stepping stone towards proving this conjecture.

    \section{The General Clock Model} \label{sec:classical-limit-model}
  
In order to facilitate comparison with other clock constructions, we will describe \emph{stochastic clocks}, which are the main subject of this paper, within the general framework of \emph{quantum clocks} introduced in~\cite{ClockModel}. The  idea underlying this general model is that  a clock can be regarded as a physical system equipped with a mechanism that emits time information. To describe this mechanism, one considers two systems:

\begin{itemize}
\item \emph{Clockwork (denoted $C$):} This is the running (dynamical) part of the clock. It interacts with its environment, thereby outputting {\it time information}.  

\item \emph{Tick registers (denoted $T_1, T_2, \ldots $):}  These belong to the environment of the clockwork. Each tick register~$T_i$ is, in sequential order, brought in brief contact with the clockwork to record time information. After this interaction, it is separated from the clockwork, keeping a record of the {\it clock time}. 
\end{itemize}

This idea is captured by the following definition.

 \begin{mydef}\label{def:QC}
  A \emph{quantum clock}  is defined by a pair $(\rho^0_C, \cM_{C \to CT})$, where $\rho^0_C$ is a density operator of a system $C$, called \emph{clockwork}, and $\cM_{C \to C T}$ is a  completely positive trace preserving (CPTP) map from $C$ to a composite system $C \otimes T$, where $T$ is called \emph{tick register}. 
\end{mydef}

The map $\cM_{C \to CT}$ may be decomposed into a part corresponding to an internal evolution $\cM^{\mathrm{int}}_{C \to C}$ of the clockwork and a part $\cM_{C \to CT}^{\mathrm{meas}}$ that extracts time information from the clockwork to the tick registers. However, one can also always take the internal evolution to be a trivial map, i.e. $\cM^{\mathrm{int}}_{C \to C}=\mathcal{I}$, and incorporate the entire clock dynamics into the action of the measuring map $\cM_{C \to CT}^{\mathrm{meas}}$. In the case of the \emph{stochastic clock} (Definition~\ref{def:StochasticMap} below), the clockwork consists of a finite number of ``classical'' states, and the evolution can be represented probabilistically, hence the label \textit{stochastic clock}. Inspired by the Salecker-Wigner-Peres clock model~\cite{Salecker1958, Peres}, we can picture a stochastic clock as a pointer moving on a circle, but with discretised dynamics. In other words, the clock's evolution can be characterised by the probabilities that determine how likely it moves from one state to another.  We will refer to these discrete movements as \emph{jumps}. 

Before proceeding to the formal definition of stochastic clocks, we first introduce a property for quantum clocks which captures the idea that the dynamics of any physically realistic clock must be continuous to a certain degree. The definition is also taken from~\cite{ClockModel}. 

        \begin{mydef}
    \label{def:ContClock}
    A  quantum clock $(\rho^0_C, \cM_{C \to C T})$ is called  \emph{$\eps$-continuous} if the map $\cM_{C \to C T}$ restricted to $C$ is $\eps$-close to the identity map $\cI_C$, i.e., 
    \begin{align}
      \frac{1}{2}\| \tr_{T} \circ \cM_{C \to C T} - \cI_C \|_{\diamond} \leq \eps
    \end{align}
    where $\| \cdot \|_{\diamond}$ is the diamond norm. 
     \end{mydef}
                   

 To fulfil this continuity condition for the clock, the total change of the clock state at every step has to be small enough. This means, in particular, that in any single application of the clock map, the occurrence of a tick must have a small probability.  We will show below that a stochastic clock whose jump probabilities are upper bounded by   $\eps$ is at least $\eps$-continuous.

     \section{Stochastic clocks} \label{subsec:classical-limit-model-mathematical-formulation}
     
      To define stochastic clocks we need to specify under what conditions such a clock produces a tick. We require that the tick registers output \textit{ticks} (represented by $\ket{1}_{T}$ in the two-dimensional tick register space) with a certain probability, whenever the clock system makes a jump that passes to or over the state $\ket{0}_{\cC}$ from a non-zero state. Note that without loss of generality one can choose any other state or states on the circle $\lbrace \ket{0},...,\ket{d} \rbrace$ for the tick states. For all other jumps the tick register would provide \textit{no-tick} --- events (represented by $\ket{0}_\mathrm{T}$). Note that one can think of these clocks as the {\it reset clocks}, since we will require that the clock starts over from the state $\proj{0}_C$ if the tick occurs.
      
In contrast to a general quantum clock, stochastic clocks are easier to analyse. Indeed, even for moderately large dimensions~$d$ of the clockwork~$C$, a general dynamics can lead to internal states of the clock that do not admit a simple description. 
         
  To understand the formal definition below, it is convenient to think of random variables $S_n$, for  $n \in \mathbb{N}_{\geq 0}$, which  represent the position on the $(d+1)$-dimensional clockwork  after $n$ applications of the clock map $\cM$.  Imagine that the clock state can jump to any of the states of the clockwork, which we denote by $\lbrace\ket{0},...,\ket{d} \rbrace$, with a given probability.
 We can represent this using Markov chains where $$P(S_{n+1}=l \mid S_n=k)=p_{k,l}\,  \quad n \in \mathbb{N}_{\geq 0}, \, k,l\in \lbrace 0,...,d \rbrace.$$
  Hence $P(S_{n+1}=l)=\sum\limits_{k=0}^d p_{k,l}P(S_n=k)$. This is  a  homogenous Markov chain, where the probability distribution does not depend on $n$. 
We will now define mathematically a general stochastic reset clock.
         \begin{mydef}
         \label{def:StochasticMap}
         {\it A stochastic reset clock} is given by a pair $(\rho_C^0,\cM_{C\to CT}^{\mathrm{st.}})$, as in Def.~\ref{def:QC}, with $\rho_C^0=\sum_j q_j\proj{j},\,j\in\lbrace 0,...,d\rbrace$ and the map $\cM^{\mathrm{st.}}:C \to CT$ taking the form:
         \begin{align*}
         \nonumber \cM^{\mathrm{st.}}(\rho_C)=\sum_{m=j}^{d}\sum_{j=0}^d \langle j|\rho_C|j \rangle p_{jm} \proj{m}_C \otimes \proj{0}_T + \\
         \sum_{m=0}^{j-1}\sum_{j=1}^d \langle j|\rho_C|j \rangle p_{jm} \proj{0}_C \otimes \proj{1}_T  \ .
         \end{align*}
          \end{mydef}
         
         Note that any passage over state $\proj{0}$ from a non-zero state is considered a tick, and hence after each detected tick the clock starts anew from the state $\proj{0}_C$. Without loss of generality one can take $\rho_C^0=\proj{0}$, and also choose any other state instead of $\ket{0}$ as a tick state of the clock. Values $p_{jk}$ represent a probability that clock jumps from the state $j$ to $k$, and as mentioned above the probability distribution for a jump is the same from any state, {\it i.e.} $p_{jk}=p_{0,(k-j) \mod d}$. Importantly, $p_{jk}\in [0,1],\,\forall j,k$ and $\sum\limits_{k=0}^d p_{jk}=1,\,\forall j \in \lbrace 0,...,d \rbrace$.\\
         
          We can show that the above defined $\cM_{C \to CT}^{\mathrm{st.}}$ is a valid clock map as prescribed by Def.~\ref{def:QC}.
       \begin{mylem}
       \label{Lemma1}
       The map $\cM_{C \to CT}^{\mathrm{st.}}$ from Def.~\ref{def:StochasticMap} is a completely positive trace preserving map.
       \end{mylem}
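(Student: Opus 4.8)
The plan is to write down an explicit Kraus decomposition of $\cM^{\mathrm{st.}}_{C\to CT}$ and then verify the two defining properties of a CPTP map directly from it. For every ordered pair $(j,m)$ with $j,m\in\{0,\dots,d\}$ I would set
\begin{align*}
K_{jm}&=\sqrt{p_{jm}}\,\bigl(\ket{m}_C\otimes\ket{0}_T\bigr)\bra{j}_C && \text{if } m\geq j,\\
K_{jm}&=\sqrt{p_{jm}}\,\bigl(\ket{0}_C\otimes\ket{1}_T\bigr)\bra{j}_C && \text{if } m< j,
\end{align*}
so that each $K_{jm}$ is a rank-one operator from $\mathcal{H}_C$ to $\mathcal{H}_C\otimes\mathcal{H}_T$ with fixed bra $\bra{j}_C$. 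Because of this rank-one structure one gets $K_{jm}\rho_C K_{jm}^\dagger=p_{jm}\langle j|\rho_C|j\rangle\,\proj{m}_C\otimes\proj{0}_T$ in the first case and $p_{jm}\langle j|\rho_C|j\rangle\,\proj{0}_C\otimes\proj{1}_T$ in the second; in particular the dephasing in the basis $\{\ket{j}\}$ that is implicit in the coefficients $\langle j|\rho_C|j\rangle$ of Definition~\ref{def:StochasticMap} is reproduced automatically. Summing over all admissible pairs --- those with $m\geq j$ giving the first (no-tick) double sum and those with $m<j$ the second (tick) double sum --- yields $\cM^{\mathrm{st.}}_{C\to CT}(\rho_C)=\sum_{j,m}K_{jm}\rho_C K_{jm}^\dagger$, and any map of this form is completely positive. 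This settles the first half.

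For trace preservation I would compute $K_{jm}^\dagger K_{jm}$: in both cases the $C$- and $T$-factors of the ``ket'' part are unit vectors, hence $K_{jm}^\dagger K_{jm}=p_{jm}\proj{j}_C$. Since for each fixed $j$ the ranges $m\in\{j,\dots,d\}$ and $m\in\{0,\dots,j-1\}$ occurring in the two terms of $\cM^{\mathrm{st.}}$ partition $\{0,\dots,d\}$, every pair $(j,m)$ is counted exactly once and therefore
\begin{align*}
\sum_{j=0}^{d}\sum_{m=0}^{d}K_{jm}^\dagger K_{jm}=\sum_{j=0}^{d}\Bigl(\sum_{m=0}^{d}p_{jm}\Bigr)\proj{j}_C=\sum_{j=0}^{d}\proj{j}_C=\one_C,
\end{align*}
where the normalisation $\sum_{m=0}^{d}p_{jm}=1$ for every $j\in\{0,\dots,d\}$, which is part of Definition~\ref{def:StochasticMap}, is used in the second step. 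This is the Kraus completeness relation, so $\cM^{\mathrm{st.}}_{C\to CT}$ is trace preserving.

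I do not expect a genuine obstacle here; the statement is essentially a routine verification. The only points that require a little care are the bookkeeping of the (somewhat informally written) nested summation ranges in Definition~\ref{def:StochasticMap} and the convention that $\ket{0}_T$ and $\ket{1}_T$ are to be read as isometric ``preparation'' maps $\mathbb{C}\to\mathcal{H}_T$ when they appear inside a Kraus operator. A slightly more structural alternative, which avoids writing Kraus operators at all, is to note that $\cM^{\mathrm{st.}}=\mathcal{N}\circ\mathcal{P}$, where $\mathcal{P}\colon\rho_C\mapsto\sum_j\proj{j}\rho_C\proj{j}$ is the pinching channel on $C$ (manifestly CPTP) and $\mathcal{N}$ maps the resulting diagonal state, regarded as the probability vector $(\langle j|\rho_C|j\rangle)_j$, to the claimed output on $CT$ via the stochastic matrix $(p_{jm})$; a classical channel built from a stochastic matrix is automatically CPTP. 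I would nevertheless present the explicit Kraus decomposition as the main argument, since it establishes both properties in one go.
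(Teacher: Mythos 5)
Your proposal is correct and follows essentially the same route as the paper's own proof: the same explicit rank-one Kraus operators $\sqrt{p_{jm}}\,\ketbra{m\,0}{j}$ (no-tick, $m\geq j$) and $\sqrt{p_{jm}}\,\ketbra{0\,1}{j}$ (tick, $m<j$), with complete positivity read off from the Kraus form and trace preservation from $\sum_m p_{jm}=1$ giving $\sum_{j,m}K_{jm}^{\dagger}K_{jm}=\one_C$. The pinching-plus-stochastic-matrix remark is a nice structural aside but is not needed; the main argument already coincides with the paper's.
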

       \begin{proof}
       See Appendix.
       \end{proof}

        One can think of the simplest stochastic clock model, which is described by a toss of a coin at every step. We will call this the {\it stochastic ladder clock}. This clock was first studied in~\cite{ChrisThesis} (notice the usage of a similar clock in~\cite{ThermodynamicQClock} for thermodynamical considerations, and in~\cite{RPaper}, which is further discussed in the discussion section).
        
\begin{mydef}
 {\it A stochastic ladder clock} is a clock defined by Def.~\ref{def:StochasticMap} that can jump only to the next state with the jump probability $\delta\in [0,1]$ and stay in the current state with the probability $1-\delta$. Hence it is defined by:
\begin{align}
\label{eq:ladderClock}
\nonumber P(S_{n+1}=k+1\mid S_n=k)=p_{k,k+1}=\delta, \\
 P(S_{n+1}=k \mid S_n=k)=p_{k,k}=1-\delta
\end{align}
 independent of $n$ and $k$. 
        \end{mydef}
        
    We will next show that a stochastic clock can be made $\eps$-continuous, for arbitrary $\eps$, by choosing its probability distribution such that the total probability for a clock to move from its current state is sufficiently small.
      
  To obtain the diamond norm needed for continuity calculation, we will use the mathematical framework of semidefinite programming, as applied to the channel distance in~\cite{JohnWatrous}.
      
 For a quantum channel $\Phi\;:\;\sysC\rightarrow\sysC$, it holds that $\frac{1}{2}\|\Phi\|_\diamond\leq\eps$ if and only if
      \begin{equation}
      \label{eq:semidef}
            \begin{array}[t]{ll}
              \mbox{\textbf{there exists:}} & Z\geq J(\Phi),\\[1em]
              &Z\in \Pos\left(\sysC\otimes\sysC'\right)\\[1em]
              \mbox{\textbf{such that:}} &\|\tr_{\sysC}(Z)\|_\infty\leq\eps\\[1em]
            \end{array}        
      \end{equation} 
    where we take  $\sysC'$ isomorphic to $\sysC$ and  $J(\Phi)$ is a so-called Choi-Jamio\l{}kowski state,
    $$J(\Phi)=\sum\limits_{i,j}\Phi(\ket{i}\bra{j})\otimes \ket{i}\bra{j}_{C'}.$$
     In our case $\Phi = \tr_\mathrm{T}\circ\map_{\CtoCT}^{\mathrm{st.}}-\mathcal{I}_\mathrm{C}$.
    
     For our purpose it is enough to find a good upper bound for the channel distance, as the continuity holds with this upper bound as $\eps$.
      We will use the fact that any dual feasible operator $Z$ can be used to compute an upper bound on the optimal value of the dual problem~\ref{eq:semidef}. 
       
         Note that the values  $p_{0,k}$ will determine critically the possible values of $\eps$ for $\eps$-continuity. We will now show how exactly $\eps$ relates to the probability distribution of the jump of the clock.
         \begin{mylem}
         \label{Lemma2}
         The stochastic clock $(\rho_C^0,\cM_{C \to CT}^{\mathrm{st.}})$, with $\rho_C^0=\sum_j q_j\proj{j},\,j\in\lbrace 0,...,d\rbrace$ and $\cM^{\mathrm{st.}}$ of Def.~\ref{def:StochasticMap}, is $\eps$-continuous iff $\max\limits_{i \in \lbrace 0,...,d\rbrace}(1-p_{ii})\leq\eps$.
        \end{mylem}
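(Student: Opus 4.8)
The plan is to establish the two implications of the equivalence separately: the forward one by a direct evaluation of the channel distance on a well-chosen input, and the backward one via the semidefinite-programming certificate \eqref{eq:semidef}. The starting point is the structure of the channel $\Phi := \tr_{\mathrm{T}}\circ\cM_{C\to CT}^{\mathrm{st.}} - \cI_C$. By Def.~\ref{def:StochasticMap} the map $\cM^{\mathrm{st.}}$ depends only on the diagonal entries $\langle j|\rho_C|j\rangle$ of its argument and returns a diagonal operator; hence, as is appropriate for a stochastic clock whose clockwork carries only classical information, $\Phi$ acts as a classical channel on the commutative algebra of operators diagonal in the basis $\{\ket{0},\dots,\ket{d}\}$, and it is this classical distance that controls the continuity condition. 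Writing $\sigma_i := \tr_{\mathrm{T}}\bigl[\cM^{\mathrm{st.}}(\proj{i}_C)\bigr]$, a probability distribution over $\{\ket{0},\dots,\ket{d}\}$ with $\langle i|\sigma_i|i\rangle = p_{ii}$ (directly from Def.~\ref{def:StochasticMap}), the relevant Choi--Jamio\l{}kowski operator of $\Phi$ is the diagonal operator
\begin{align*}
  J(\Phi) &= \sum_{i=0}^{d}\sum_{m\neq i}\langle m|\sigma_i|m\rangle\,\proj{m}_C\otimes\proj{i}_{C'} \\
  &\qquad -\; \sum_{i=0}^{d}\,(1-p_{ii})\,\proj{i}_C\otimes\proj{i}_{C'} .
\end{align*}

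For the direction ``$\max_i(1-p_{ii})\le\eps \Rightarrow \eps$-continuity'', I would exhibit the positive part of $J(\Phi)$ as a dual-feasible operator for \eqref{eq:semidef}, namely
\begin{align*}
  Z := \sum_{i=0}^{d}\sum_{m\neq i}\langle m|\sigma_i|m\rangle\,\proj{m}_C\otimes\proj{i}_{C'} \;\in\; \Pos(C\otimes C') .
\end{align*}
Then $Z\ge 0$ is immediate, $Z - J(\Phi) = \sum_i (1-p_{ii})\,\proj{i}_C\otimes\proj{i}_{C'}\ge 0$ gives $Z\ge J(\Phi)$, and since $\tr_{\mathrm{C}}\bigl(\proj{m}_C\otimes\proj{i}_{C'}\bigr)=\proj{i}_{C'}$ and $\sum_{m\neq i}\langle m|\sigma_i|m\rangle = 1-p_{ii}$, one obtains $\tr_{\mathrm{C}}(Z)=\sum_i (1-p_{ii})\proj{i}_{C'}$, so that $\|\tr_{\mathrm{C}}(Z)\|_\infty = \max_i(1-p_{ii})\le\eps$. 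By \eqref{eq:semidef} this certifies $\tfrac12\|\Phi\|_\diamond\le\eps$, i.e.\ $\eps$-continuity. Producing and verifying this certificate is the computational core of the proof.

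For the converse, I would lower-bound the channel distance by its value on the classical input $\proj{i}_C$ (tensored with a trivial reference system): $\|\Phi\|_\diamond\ge\|\Phi(\proj{i}_C)\|_1 = \|\sigma_i-\proj{i}\|_1$. The operator $\sigma_i-\proj{i}$ is diagonal and traceless with negative part of weight $1-p_{ii}$, so $\|\sigma_i-\proj{i}\|_1 = 2(1-p_{ii})$, and $\eps$-continuity forces $\eps\ge\tfrac12\|\Phi\|_\diamond\ge 1-p_{ii}$ for every $i$; maximising over $i$ and combining with the first part gives the stated equivalence. The step I expect to require the most care is the first: correctly arguing that, for a stochastic clock, the continuity condition reduces to the classical distance of the induced transition matrix from the identity — equivalently, that the physically relevant Choi operator is the diagonal one displayed above. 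Once that reduction is in hand, both the dual-feasible $Z$ and the matching lower bound are short computations.
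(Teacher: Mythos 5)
Your ``only if'' direction is sound: evaluating $\Phi$ on the classical input $\proj{i}_C$ and computing $\tfrac12\|\sigma_i-\proj{i}\|_1=1-p_{ii}$ is a valid lower bound on $\tfrac12\|\Phi\|_\diamond$, and it matches what the lemma requires. The gap sits exactly where you predicted: the claim that ``the relevant Choi--Jamio\l{}kowski operator of $\Phi$ is the diagonal one'' is false. Although $\tr_T\circ\cM^{\mathrm{st.}}$ is a classical (dephasing) channel, the identity $\cI_C$ is not, and $J(\cI_C)=\sum_{i,j}\ketbra{i}{j}_C\otimes\ketbra{i}{j}_{C'}$ is the unnormalised maximally entangled operator. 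Since $\cM^{\mathrm{st.}}$ annihilates off-diagonal matrix elements, one has $\Phi(\ketbra{i}{j})=-\ketbra{i}{j}$ for $i\neq j$, so the true $J(\Phi)$ carries the coherent block $-\sum_{i\neq j}\ketbra{i}{j}_C\otimes\ketbra{i}{j}_{C'}$ which your displayed $J(\Phi)$ silently discards (the paper's own computation retains it). With the correct $J(\Phi)$ your certificate is not dual feasible: taking $\ket{\psi}=\ket{0}_C\ket{0}_{C'}-\ket{1}_C\ket{1}_{C'}$ gives $\bra{\psi}\bigl(Z-J(\Phi)\bigr)\ket{\psi}=(1-p_{00})+(1-p_{11})-2=-(p_{00}+p_{11})$, which is strictly negative whenever $p_{00}+p_{11}>0$, so $Z\geq J(\Phi)$ fails.

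The underlying problem is that the diamond norm is a supremum over all inputs, including coherent superpositions and entangled extensions, and on those the dephasing performed by $\cM^{\mathrm{st.}}$ is far from the identity irrespective of the $p_{ii}$ (for $p_{ii}=1$, $\tr_T\circ\cM^{\mathrm{st.}}$ is exactly the completely dephasing map, whose diamond distance to $\cI_C$ is of order one, not zero). Hence the reduction to ``the classical distance of the transition matrix from the identity'' is not a harmless simplification --- it is the entire content of the ``if'' direction, and asserting it amounts to assuming the conclusion. For comparison, the paper keeps the off-diagonal block and is therefore forced to argue that $X=Z-J(\Phi)$, which contains $\sum_{i\neq j}\ketbra{i}{j}\otimes\ketbra{i}{j}$, is positive semidefinite; that positivity check (not the easy computation of $\tr_C Z$) is the crux of the whole argument, and it is precisely the step your route bypasses without justification. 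To repair your proof you would need either to verify dual feasibility against the full $J(\Phi)$, or to give an independent argument that the supremum in the diamond norm for this particular $\Phi$ is attained on diagonal inputs --- neither of which is immediate.
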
 
        \begin{proof}
        See Appendix. 
   \end{proof}
   
   Note that the statement of the lemma is rather intuitive. Indeed, a clock that moves from its current state with a large probability has a dynamics that does not approximate well the identity map.

\section{Performance of the stochastic clocks in the Alternate Ticks Game}
\label{sec:ATGame}

 As defined in~\cite{ClockModel}, the {\it Alternate Ticks (AT) Game} involves two collaborating players, $A$ and $B$, each equipped with a quantum clock as defined in Def.~\ref{def:QC}. The players can agree on a common strategy beforehand, but are not allowed to communicate once the game begins, nor to correlate their clocks prior to the start of the game. They are asked to provide ticks with their clocks in an alternating order --- {\it e.g.} first from $A$, then from $B$, then again from $A$, and so on. The goal of the players is to maximise the joint number of  ticks respecting the posed alternate ticks condition. 

We will now restrict to the case when clock starts from one of the basis states $\proj{j},\,j \in \lbrace 0,...,d \rbrace$. We describe the number of jumps (applications of the map $\cM$) needed for a clock to produce a tick $i$ after the tick $i-1$ by a random variable $Y_i$. Note that all $Y_i$ are independent, identically distributed ({\it iid}), since we consider reset clocks --- after the tick they always start anew from the state $\proj{0}$ and move with the same probability distribution. We will see that in the circular board picture, where the pieces move on the circle in one direction, the AT Game can be lost only if one player's peace overtakes the other for at least $d$ positions. This is however not sufficient as the other player can catch up with this fallout. We comment on this in more details below.

  We will assume that clocks can only move forward, but note that moving forward for $d$ steps is equivalent to moving backward for one step, since we can describe the dynamics on a circle. Hence we are still dealing with the most general case.
    To be able to analyse this further using probability theory, we will next define classical description of the stochastic clocks and the game dynamics. 
   
     \begin{mydef}
      Let $({\rho}_{\sysC}^{0},\cM_{\CtoCT}^{\mathrm{st.}})$ be a $d$-dimensional stochastic clock, with $\rho_C^0=\proj{j}$, for some $j \in \lbrace 0,...,d \rbrace $. We call the triple $\left(d,P^0,\{P^n\}_{n\in\mathbb{N}}\right)$ its classical description where $P^0 = j$ for $\rho_{\sysC}^{0} = \ket{j}\bra{j}$, $j\in\{0,\dots,d\}$  and the random variable $P^n$ denotes the number of states for which the clock has moved forward (has jumped over) after $n$ applications of the map $\map_{\CtoCT}^{\mathrm{st.}}$.
    \end{mydef}
    Note that the $k^{\mathrm{th}}$ tick happens after $\min \lbrace n:P^n=d k-1\rbrace$ steps. Now to quantify the success in the AT Game we need to look at the random variable describing the distance on the circular board between the players A and B. Suppose $P_\mathrm{A}^0\geq P_\mathrm{B}^0$, hence the player A is in front of the player B on the board.
       \begin{mydef}
    The \emph{classical game description} after $n$ applications of $\map_{\CtoCT}^{\mathrm{st.}}$, $\left(d,Q^0,\{Q^n\}_{n\in\mathbb{N}}\right)$, is defined by the random variable
    \begin{equation} \label{def:relative-position}
    Q^n := Q^0 + \left(P_\mathrm{A}^n - P_\mathrm{B}^n\right),
    \end{equation}
    where $Q^0 = P_\mathrm{A}^0 - P_\mathrm{B}^0$. We also introduce the random variables  $\Delta^n=Q^{n+1}-Q^{n}=P^{n+1}_{A}-P^{n}_{A}-(P^{n+1}_{B}-P^{n}_{B}),\,n\in\lbrace 0,1,...\rbrace$.
    \end{mydef}
    
       We now derive boundary cases for losing the game.
       
            \begin{mylem}
            \label{lem:NecessarySufficient}
     For the AT Game played with the stochastic clocks of dimension $d$ to be halted, it is necessary that one player is at least $d$ steps ahead of the other, {\it i.e.} $Q^n>d$ or $Q^n<0$. A sufficient criterion for halting is that one of the players is $2d$ steps in front of the other, {\it i.e.} $Q^n>2d$ or $Q^n<-d$.  
     \end{mylem}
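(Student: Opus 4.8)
The plan is to reduce the halting rule of the AT~Game to a combinatorial condition on the relative position~$Q^n$, and then to settle that condition by elementary floor arithmetic together with a short pigeonhole argument.

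First I would pass from ticks to counts. Fix a run and let $N_A(n)$, $N_B(n)$ be the numbers of ticks emitted by~$A$ and by~$B$ up to step~$n$. The game has halted by step~$n$ precisely when the chronologically ordered list of all ticks emitted so far contains two adjacent ticks of the same clock. One direction of the translation is free from the pigeonhole principle: if $N_A(n)-N_B(n)\geq 2$ for some~$n$, then the $N_B(n)$ $B$-ticks cannot separate the $N_A(n)$ $A$-ticks, so two $A$-ticks are adjacent and the game has stopped (and symmetrically with $A$ and~$B$ exchanged). For the other direction one inspects the possible ``first violation'' configurations --- the very first tick is $A$'s or $B$'s, combined with $A$ or~$B$ being the clock that ticks twice in a row --- and notes that in each case, either at the violation itself or at the step where the ``wrong'' clock emitted the first tick, $N_A-N_B$ is sign-definite and of absolute value at least~$1$.

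The heart of the matter is an identity tying $N_A-N_B$ to~$Q^n$. A stochastic reset clock emits a tick exactly when its clockwork passes to or over~$\ket{0}$ from a non-zero state, and this fires even if the passage is accomplished in one large jump; hence the clock registers \emph{every} crossing of the length-$d$ circle, so its tick count at step~$n$ equals the number of multiples of~$d$ that its accumulated forward position has passed --- i.e.\ $\lfloor(P_A^n+1)/d\rfloor$ for~$A$, and likewise for~$B$. Subtracting and applying the elementary bound $\lfloor a\rfloor-\lfloor b\rfloor\in\{\lfloor a-b\rfloor,\lfloor a-b\rfloor+1\}$ gives
\[
  N_A(n)-N_B(n)\ \in\ \bigl\{\,\lfloor Q^n/d\rfloor,\ \lfloor Q^n/d\rfloor+1\,\bigr\},
\]
with $Q^n$ as in~\eqref{def:relative-position} and $Q^0=P_A^0-P_B^0\in\{0,\dots,d\}$ the head start of~$A$. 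From this I would also extract, via the strict form of the floor estimate, the two refinements $N_A-N_B=2\Rightarrow Q^n>d$ and $N_A-N_B\leq-1\Rightarrow Q^n<0$; the head start $Q^0\geq0$ sitting here is exactly what makes the two thresholds in the statement asymmetric ($d$ against~$0$, and $2d$ against~$-d$).

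Both halves of the lemma then follow. \emph{Necessity:} assume $0\leq Q^n\leq d$ at every step. By the refinements $0\leq N_A(n)-N_B(n)\leq 1$ throughout, so each first-violation configuration is excluded --- an $A,A$ violation would need $N_A-N_B=2$, hence $Q^n>d$ somewhere, and every $B,B$ violation needs $N_A-N_B\leq-1$ at the violation or at the wrong first tick, hence $Q^n<0$ somewhere --- and the game never halts; contrapositively, halting forces $Q^n>d$ or $Q^n<0$ at some step. \emph{Sufficiency:} if $Q^n>2d$ at some step then $\lfloor Q^n/d\rfloor\geq 2$, so $N_A(n)-N_B(n)\geq 2$ and the pigeonhole observation already shows the game has stopped; the case $Q^n<-d$ is treated the same way with the roles of $A$ and $B$ interchanged, using that $B$ begins behind. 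The step I expect to be the real work is the low-level bookkeeping hidden in the last two paragraphs: verifying the floor formula also in the regime before a clock has ticked for the first time, keeping track of the ``reset'' of a ticking clock (and of any overshoot that is discarded by it), and running the ``who-ticks-first / who-violates'' case analysis carefully enough that the asymmetric boundary values $d$, $0$, $2d$, $-d$ emerge exactly rather than merely up to an additive constant.
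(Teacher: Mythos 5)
Your proposal is correct and follows essentially the same route as the paper's proof: both directions rest on writing each clock's tick count as a floor of its accumulated position (so that $N_A-N_B$ differs from $\lfloor Q^n/d\rfloor$ by at most one), with the pigeonhole principle giving sufficiency and a who-ticks-first/who-violates case analysis giving necessity. The one point to be careful about is that in the $A,A$-violation branch you genuinely need $N_A-N_B=2$ rather than merely ``absolute value at least $1$'' (your final paragraph gets this right); the paper reaches the same conclusion by evaluating $Q^{n_{\mathrm{end}}}$ from the positions at the last alternate ticks instead of through your uniform floor identity.
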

     \begin{proof}
     See Appendix.
     \end{proof}
      The expected number of ticks in the AT Game is hence lower bounded by the number of ticks until one player is exactly $d$ steps in front of the other, and upper bounded by the expected number of ticks until one is exactly $2d$ steps in front of the other.
   
         Assume that each clock can move up to $m \leq d$ states in one jump. Then the distance between the positions of the clocks $A$ and $B$ can change for any value between $\lbrace 0,..., m \rbrace$ in one jump, \textit{i.e.} $\Delta^n=Q^{n+1}-Q^n \in \lbrace 0,..., m \rbrace$. Note that $p_{0,m}\neq 0$. We are interested in the minimal number of steps $n$ before the random variable $Q^n$ reaches one of the boundaries (one player being one or two circles in front of the other). This will lead to a higher-order differential equation, in the general case of order $d$. We will assume player $A$ was in front of the player $B$ at the start of the game, and $Q^0=z$.
         
         Denote with $T$ a random variable counting the number of ticks, $Y_i$ a random variable counting number of jumps between two ticks, and with $\mathbb{E}(S)=D_z$ expected number of jumps until the game is lost for initial distance between the clock states $Q^0=z$. Note that $Y=Y_i$ and $T$ are independent random variables. Then the following holds:
         \begin{align}
         \nonumber&\mathbb{E}(S)=D_z=\sum\limits_{i=1}^T Y_i=\mathbb{E}(Y\cdot T)=\mathbb{E}(Y)\cdot \mathbb{E}(T)=\mathbb{E}(Y)\cdot N\\
         \nonumber & N=\frac{D_z}{\mathbb{E}(Y)}
         \end{align}
          \\
            where $N$ is the expected number of alternate ticks.  In general, for the expected number of jumps $S$ until the end of the game (number of applications of the map $\cM$), we have:
               \begin{equation}
              \label{eq:recurrence}
              \begin{split}
             &\mathbb{E}(S|Q^0=z)=D_z= p_{1}(D_{z+1}+1)+p_2(D_{z+2}+1)\\
             &+...+p_m(D_{z+m}+1)+ p_0(D_z+1)+p_{-1}(D_{z-1}+1)\\
             &+p_{-2}(D_{z-2}+1)+...+p_{-m}(D_{z-m}+1)
             \end{split}
             \end{equation}
              where $m<d$ is the maximal number of states for which the positions of A and B can differ in one jump, and $p_k$ denotes the probability that the difference between positions of players $A$ and $B$ on the circle changed for exactly $k$ in one jump. 
              We can have maximum $m=d$ which is the case that one clock jumps for $d$ steps and the other stays where it was.\\
              Note that the value of $m$ will be important in determining $\eps$-continuity of a clock, since the bigger possible $m$ we have, the less possibilities we have for small $\eps$. In other words, the probability for the clock to move for many states in the same jump has to remain low for an $\eps$-continuous clock with small $\eps$.
              
              \subsection{Identical clocks in the Alternate Ticks Game} 
              
             If the two clocks are identical, the probability to change the distance between A and B for $+k$ or $-k$ states in one step will be the same. This means that variable $Q^n$ changes for $+k$ or for $-k$ with the same probability, leading to $p_k=p_{-k}\,, \forall k \in \lbrace 1,...,m \rbrace$.  Note also that $$\sum\limits_{k=-m}^m p_k=2\sum\limits_{k=1}^m p_k+p_0=1$$ 
             
             Hence our recurrence relation becomes:
                \begin{equation}
              \label{eq:recurrenceGen}
              \begin{split}
              (2\sum\limits_{k=1}^m p_k)D_z = \\
              p_{1}D_{z+1}+p_2D_{z+2}+...+p_mD_{z+m}+ \\
              p_{1}D_{z-1}+p_{2}D_{z-2}+...+p_{m}D_{z-m}+1
             \end{split}
             \end{equation}
             
             Solving this relation (done in details in Appendix) and inputing optimal $z=\frac{d}{2}$, we obtain:
             \begin{equation}
             \label{eq:boundGeneral}
            \frac{d^2/4+d+1}{2\sum\limits_{k=1}^m k^2p_k} \leq D_{\frac{d}{2}} \leq \frac{9d^2/4+3d+1}{2\sum\limits_{k=1}^m k^2p_k}
            \end{equation}
             and we know that the expected number of ticks $N=\mathbb{E}(T)=\frac{D_z}{\mathbb{E}(Y)}$. Note that for for $\eps=0$ we have a clock described completely by the identity map, that will not produce a tick. Hence $\mathbb{E}(Y)=\infty$ for this clock and it can produce zero expected alternate ticks. Note also that by the definition of the diamond norm $\eps \leq 1$. Clock with $\eps=1$ will be described in the next section.
             
We can then show the following Theorem.
             
             \begin{mythm} 
             \label{thm:GeneralClocksBound}
             Suppose that the AT Game is played between two identical $\eps$-continuous $d$-dimensional stochastic clocks, with $\eps\in(0,1)$. Then the expected number of alternate ticks $N$ is bounded by
             $$\frac{(d+4+\frac{4}{d})p_{i,i+1}}{4m^2\eps(2-\eps)} \leq N \leq \frac{\frac{9d}{4}+3+\frac{1}{d}}{2(1-\eps)mp_{0,m}}$$
             where $m$ is the furthest state the clock can jump to from the state $0$ with probability $p_{0,m}\neq 0$, and $p_{i,i+1}$ is the probability for the clock to jump from state $i$ to $i+1$.
             \end{mythm}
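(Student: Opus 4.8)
The starting point will be the identity $N=D_{d/2}/\mathbb{E}(Y)$ together with the two-sided estimate~\eqref{eq:boundGeneral} for $D_{d/2}$, both of which are already in place. Dividing~\eqref{eq:boundGeneral} by $\mathbb{E}(Y)$ reduces the theorem to two essentially independent tasks: first, re-expressing the weighted sum $\sum_{k=1}^m k^2 p_k$ that governs the solution of the recurrence~\eqref{eq:recurrenceGen} in terms of the ``physical'' parameters $\eps$, $m$ and $p_{0,m}$; and second, sandwiching the single-clock renewal time $\mathbb{E}(Y)$ between $d/m$ and $d/p_{i,i+1}$.

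For the first task I will use that, for identical clocks, the game increment $\Delta^n=Q^{n+1}-Q^n$ is the difference of two independent copies of the jump of a single stochastic clock, whose distribution is $(p_{0,j})_{j=0}^m$ irrespective of the current state by translation invariance. Hence the ``no-change'' probability is $p_0=\sum_{j=0}^m p_{0,j}^{\,2}\ge p_{0,0}^{\,2}$, and the maximal change has probability $p_m=p_{0,m}\,p_{0,0}$, since $\Delta^n=m$ forces one clock to jump by $m$ and the other to stay. Lemma~\ref{Lemma2} turns $\eps$-continuity into $p_{0,0}\ge 1-\eps$, so that $1-p_0\le 1-(1-\eps)^2=\eps(2-\eps)$ and $p_m\ge p_{0,m}(1-\eps)$. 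Combining these with the normalisation $2\sum_{k=1}^m p_k+p_0=1$ gives
\[
m^2 p_{0,m}(1-\eps)\ \le\ \sum_{k=1}^m k^2 p_k\ \le\ m^2\sum_{k=1}^m p_k=\tfrac{m^2}{2}(1-p_0)\ \le\ \tfrac{m^2}{2}\,\eps(2-\eps),
\]
the lower bound being simply $\sum_{k=1}^m k^2 p_k\ge m^2 p_m$.

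For the second task, recall that $Y$ counts the applications of $\cM$ that a single reset clock needs in order to pass over state~$0$, i.e.\ to accumulate a forward displacement of order~$d$. Since every application advances the clock by at most $m$ positions, $Y\ge d/m$ deterministically, whence $\mathbb{E}(Y)\ge d/m$. For the opposite direction I will stochastically dominate the displacement from below by the number of $(+1)$-jumps: at each step the clock jumps by exactly one with probability $p_{0,1}=p_{i,i+1}$, and the accumulated displacement is always at least the number of such steps, so $Y$ is dominated by the negative-binomial waiting time for about $d$ successes of probability $p_{i,i+1}$; this gives $\mathbb{E}(Y)\le d/p_{i,i+1}$.

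Assembling the pieces: for the lower bound on $N$ combine the left inequality of~\eqref{eq:boundGeneral}, the right inequality of the displayed chain, and $\mathbb{E}(Y)\le d/p_{i,i+1}$, and use $(d^2/4+d+1)/d=\tfrac14(d+4+\tfrac4d)$; for the upper bound combine the right inequality of~\eqref{eq:boundGeneral}, the left inequality of the displayed chain, and $\mathbb{E}(Y)\ge d/m$. The optimality of the initial offset $z=d/2$ is the one already exploited in deriving~\eqref{eq:boundGeneral}, and $\eps\in(0,1)$ is exactly what is needed for $\mathbb{E}(Y)$ to be finite and the denominators not to vanish. The step I expect to be the main obstacle is the estimate on $\mathbb{E}(Y)$: making the stochastic-domination argument fully rigorous, and in particular handling the off-by-one in ``displacement of order $d$'' coming from the reset convention and from the possibility that a large jump overshoots the tick threshold, so that the constants match the numerators $d^2/4+d+1$ and $9d^2/4+3d+1$ of~\eqref{eq:boundGeneral} exactly rather than only up to lower-order terms.
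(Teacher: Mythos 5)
Your proposal is correct and follows essentially the same route as the paper's own proof: both divide the two-sided bound~\eqref{eq:boundGeneral} on $D_{d/2}$ by $\mathbb{E}(Y)$, control $\sum_{k=1}^m k^2 p_k$ via $p_m\geq(1-\eps)p_{0,m}$ and $1-p_0\leq\eps(2-\eps)$ using Lemma~\ref{Lemma2}, and sandwich $\mathbb{E}(Y)$ between $d/m$ and $d/p_{i,i+1}$. The only differences are cosmetic: you justify $\mathbb{E}(Y)\leq d/p_{i,i+1}$ by an explicit stochastic-domination argument where the paper merely asserts the bound, and your deterministic observation $Y\geq d/m$ replaces the paper's slightly looser phrasing via $\mathbb{E}(Y)\geq s$ with $m=\lfloor d/s\rfloor$.
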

             
             \begin{proof}
             See Appendix.
             \end{proof}

We can see from the above result that for $mp_{0,m}$ small, we have a greater linear upper bound on the number of ticks, and for $mp_{0,m}$ approaching $d$ we have a very bad constant bound. This suggest that the ladder clock is one of the best stochastic clocks one can build, since $mp_{0,m}=\delta \leq \eps$.\\
   To maximise lower bound one needs $m^2\eps$ to be small, which again is minimised for a ladder clock (where $m=1$) with small $\eps$. \\

\subsection{Perfect stochastic clocks}

We will now introduce a clock that we will call {\it the perfect clock}. When the  AT Game is played with two perfect clocks, it always produces infinite number of ticks.  However, the perfect clock is not $\eps$-continuous for $\eps<1$, i.e., it is not a physical clock (see~\cite{ClockModel}).

\begin{mydef} A {\it perfect stochastic clock} is a clock that always produces a tick after exactly the same number of applications of the map $\cM$.
\end{mydef}

Again note that this clock is not $\eps$-continuous for $\eps < 1$ since $1-p_{ii}=1$, and that if two identical perfect clocks play the AT Game then $p_k =0,\,k \geq 1$. Hence from Eq.~\ref{eq:boundGeneral} the expected number of ticks is $N=\mathbb{E}(T)=\infty$.

In the ladder clock model, this clock will always produce a tick after $D-1$ steps, where $D$ is the dimension of the clockwork system. Since we have taken that $D=d+1$, the tick is produced after exactly $d$ steps. Hence $\mathbb{E}(Y)=d$ for the perfect stochastic ladder clock. 

In the Appendix we show that the following bunds hold if this clock was to play the AT Game with any stochastic ladder clock:
  \begin{equation}
  \frac{d+1}{1-\delta}\leq D_0\leq\frac{2d+1}{1-\delta}
  \end{equation}
  Then $N=D_0/d$ gives a bound on the number of alternate ticks, which will be very bad for small $\delta$, as expected.
  If the perfect ladder clock plays against the perfect clock we have $\delta=1$ and they can play the AT Game forever.

        \section{Bounds on the expected number of alternate ticks  for  ladder clocks} \label{sec:result}

 For any clockwork dimension $d$, we will derive a tight lower and upper bound on the expected number of alternate ticks generated in the AT Game with two identical ladder clocks. 
    
     The lower bound obtained also provides us with the minimum average number of alternate ticks achievable with the best possible quantum clocks within the given mathematical framework. It is an open question whether it is possible to achieve a stronger-than-linear relationship, but current numerical results - {\it e.g.} \cite{ClockModel, ChrisThesis} and work on different clock accuracy definitions~\cite{RPaper} confirm such a superiority of well-defined quantum to any classical clocks for certain measures used.

     Remember from Eq.~\ref{eq:ladderClock}, that for the ladder clock $p_{ii}=1-\delta,\, \forall i \in \lbrace 0,...,d \rbrace$ and $p_{i,i+1}=\delta$. One can easily deduce that the possible change in distance between two players in one step takes values: $\Delta^n \in \lbrace 0,1,-1\rbrace, \forall n$, with $P(\Delta^n=0)=\delta^2+(1-\delta)^2=r$ when either both A and B stand still or both make a step, $P(\Delta^n=1)=\delta(1-\delta)=p$ when A makes a step and B stands still and $P(\Delta^n=1)=\delta(1-\delta)=q$ when B makes a step and A stands still. \\
         The connection to the Gambler's ruin problem can be seen easily~\cite{ProbabilityTheory}. Consider scenario where at each round gambler wins with probability $p$, loses with probability $q$ or has a draw with a casino with the probability $r=1-p-q$. \\
    The following result can easily be deduced from our calculations in Sec.~\ref{sec:ATGame}. 

  \begin{mythm} \label{thm:TicksBound}
    Consider the AT Game played with two identical $d$-dimensional stochastic ladder clocks with a step probability $\delta$. Assume that an optimal strategy is applied. Then the expected number of alternate ticks 
     produced on each side is lower and upper bounded by functions asymptotically linear in the dimension $d$ of the clockwork space, given by
        \begin{equation} \label{eq:Bounds}
          \begin{aligned}
            &N_\mathrm{lower}(d) = \frac{1}{8(1-\delta)}d + \frac{1}{2(1-\delta)}+O\Big(\frac{1}{d}\Big)\\[1em]
            &N_\mathrm{upper}(d) = \frac{5}{8(1-\delta)}d + \frac{3}{2(1-\delta)}+O\Big(\frac{1}{d}\Big)
          \end{aligned}
        \end{equation}
      respectively.
    \end{mythm}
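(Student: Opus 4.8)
The plan is to reduce the ladder-clock AT Game to a Gambler's-ruin problem, as already set up in Section~\ref{sec:ATGame}, and then to pass from the expected number of \emph{jumps} to the expected number of \emph{ticks} by dividing by $\mathbb{E}(Y)$. For two identical ladder clocks the relative position $Q^n$ performs the lazy symmetric nearest-neighbour walk on the integers identified just before the theorem: it moves to $Q^n\pm1$ each with probability $p=q=\delta(1-\delta)$ and stays put with probability $r=\delta^2+(1-\delta)^2$. I would organise the argument around three ingredients: (a) upper and lower bounds on $D_z:=\mathbb{E}(S\mid Q^0=z)$, the expected number of applications of $\cM^{\mathrm{st.}}$ before the game halts, as a function of the initial offset $z$; (b) the value $\mathbb{E}(Y)$, the expected number of applications between two consecutive ticks of one clock; and (c) the identity $N=D_z/\mathbb{E}(Y)$ together with optimisation over $z$.

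For (a) and the lower bound I would specialise the recurrence~\eqref{eq:recurrenceGen} to $m=1$, where only $p_1=\delta(1-\delta)$ survives, so that it collapses to the second-order relation $D_{z+1}-2D_z+D_{z-1}=-1/p$; its general solution is $D_z=(z-a)(b-z)/(2p)$ on the interval $[a,b]$ fixed by the halting condition. By Lemma~\ref{lem:NecessarySufficient} the game certainly continues as long as $Q^n\in\{0,\dots,d\}$, so the game length is at least the exit time of $Q$ from that strip, giving $D_z\ge (z+1)(d+1-z)/(2p)$, which is largest at $z=d/2$. For (b), a ladder clock needs exactly $d$ forward jumps between ticks and each application of $\cM^{\mathrm{st.}}$ is a forward jump with probability $\delta$, so $Y$ is a sum of $d$ i.i.d.\ geometric variables and $\mathbb{E}(Y)=d/\delta$ (consistent with $\mathbb{E}(Y)=d$ for the perfect ladder clock at $\delta=1$). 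Combining via $N(z)=D_z\delta/d$ and evaluating at $z=d/2$ gives $N\ge (d/2+1)^2/(2(1-\delta)d)$, which expands to $N_{\mathrm{lower}}(d)$ in powers of $1/d$. The upper bound follows the same template with $[a,b]$ replaced by the wider interval dictated by a \emph{sufficient} halting condition.

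The main obstacle is getting the sharp leading constant $5/(8(1-\delta))$ in the upper bound: the crude ``one player $2d$ ahead'' sufficient criterion of Lemma~\ref{lem:NecessarySufficient} inserted into~\eqref{eq:boundGeneral} would only yield $9/(8(1-\delta))$, so a ladder-specific refinement is needed. The refinement must exploit that whether the trailing player can still recover depends on how far it is into its current cycle, and that the two roles are asymmetric because $A$ is set up in front of $B$ (and, as reset clocks, jump back to $\proj{0}$ after each tick). Concretely, I expect the sharp bound to come from tracking the pair $(Q^n,\text{round parity})$ --- equivalently, the tick-phase of each clock within the alternating sequence $A,B,A,B,\dots$, together with the post-reset offset --- rather than $Q^n$ alone, and from reducing on each round to the same second-order recurrence but with shifted absorbing boundaries, which changes the effective interval at $z=d/2$ from $(d/2+1)(3d/2+1)$ down to $(d/2+1)(5d/2+1)$. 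A final, routine task is to confirm that both bounds are optimised at the same offset $z=d/2$, so that the sandwich $N_{\mathrm{lower}}\le N\le N_{\mathrm{upper}}$ holds at the optimal strategy, and to carry the $O(1/d)$ remainders consistently through the asymptotic expansion.
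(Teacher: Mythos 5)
Your route is exactly the paper's: reduce the relative position $Q^n$ to the lazy symmetric walk with $p=q=\delta(1-\delta)$, $r=\delta^2+(1-\delta)^2$, solve $D_z=pD_{z+1}+rD_z+qD_{z-1}+1$ with absorbing boundaries supplied by Lemma~\ref{lem:NecessarySufficient}, divide by $\mathbb{E}(Y)=d/\delta$, and optimise over the initial offset $z$. Your lower bound is complete and coincides with the paper's: boundaries $D_{-1}=D_{d+1}=0$ give $D_z=(z+1)(d+1-z)/(2\delta(1-\delta))$, and $z=d/2$ yields $N\geq (d^2/4+d+1)/(2(1-\delta)d)$, which expands to $N_{\mathrm{lower}}(d)$.

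The gap is the upper bound, and you have put your finger on a genuine problem. The paper's proof does precisely the ``crude'' thing you reject: it imposes $D_{-d-1}=D_{2d+1}=0$, obtains $D^{\mathrm{upper}}_{d/2}=(9d^2/4+3d+1)/(2\delta(1-\delta))$, and then asserts Eq.~\ref{eq:Bounds}; carrying the division by $\mathbb{E}(Y)=d/\delta$ through gives leading coefficient $\frac{9}{8(1-\delta)}$, not $\frac{5}{8(1-\delta)}$. So the constant $5/8$ in the theorem statement is not derived anywhere in the paper either, and your proposed refinement via round parity and shifted absorbing boundaries is only reverse-engineered from the target product $(d/2+1)(5d/2+1)$ rather than deduced from the game rules; you would need to actually prove the sharper halting criterion (i.e., identify which asymmetric pair of boundaries is genuinely sufficient for halting a ladder-clock game) before the $5/8$ is legitimate. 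As it stands, both your sketch and the paper's own argument rigorously establish only the weaker upper bound with leading constant $\frac{9}{8(1-\delta)}$ (which still gives the qualitative conclusion of linear growth in $d$). To close the proof you should either carry out and justify the refined boundary analysis, or state $N_{\mathrm{upper}}$ with $9/8$ in place of $5/8$.
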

    
    \begin{proof}
    Note that the expected number of steps in the random walk until one of the boundaries is reached, is also the number of times $S$ the map $\map_{\CtoCT}^{\mathrm{st.}}$ will be applied on both sides A and B. 
    For the ladder clock we have $Y=\sum\limits_{i=0}^{i=d} X_i$ where $X_i$ are {\it iid} random variables denoting the number of applications of the map to go from position $i$ to $i+1$. \\
    
    Note that $\mathbb{E}(X_i)=\frac{1}{\delta}$ and hence $\mathbb{E}(Y)=\frac{d}{\delta}$, leading to $$N=\mathbb{E}(T)=\mathbb{E}(S) / \mathbb{E}(Y)=\frac{\delta}{d}\mathbb{E}(S)$$
     Now we are left with calculating $\mathbb{E}(S)$. But this is also the expected number of steps until random walk is ended for variable $Q^n$. Again call $\mathbb{E}(S)=D_z$ for game starting in $Q^z$. Then we obtain the following recurrence relation:
    $$D_z=pD_{z+1}+rD_z+qD_{z-1}+1$$
    with $p+q+r=1, p=q=\delta(1-\delta)$.\\
     Absorbing boundaries are $D_{-1}=D_{d+1}=0$ for the lower bound, and $D_{-d-1}=D_{2d+1}=0$ for the upper bound on the game (see Lemma~\ref{lem:NecessarySufficient}).
    The general solution of this recurrence relation is $$D_z=A+Bz-\frac{z^2}{2\delta(1-\delta)}$$
    Inserting boundary conditions, and maximising $D_z$ over $z$, to obtain the optimal starting conditions of the game (the corresponding z is $d/2$ which is as expected - it is the best strategy to start with diametrically opposite states of the two identical clocks), we obtain bounds from Eq.~\ref{eq:Bounds}.
    \end{proof}
 
     \begin{figure}[h!]
        \includegraphics[scale=0.3]{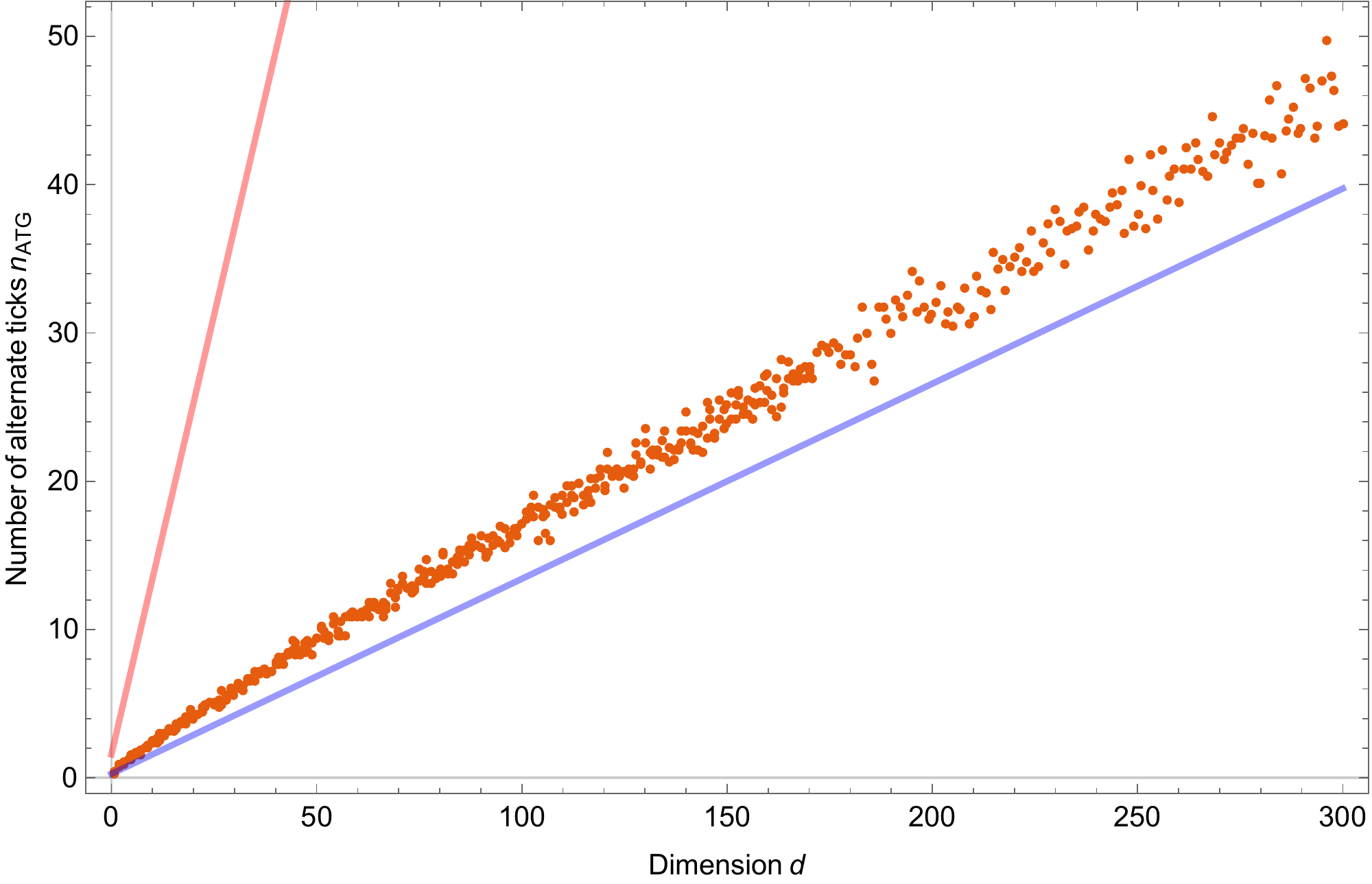}
        \caption{\textbf{Expected number of the alternate ticks of the two identical stochastic ladder clocks.} The blue line represents an asymptotic lower bound and the red line an asymptotic upper bound. The orange data points are results from numerical simulations with the step parameter $\delta=0.05$, averaged over $500$ iterations.}
         \label{ATbound}
    \end{figure}   
         
         In Fig.~\ref{ATbound} we show numerical results for the performance of the stochastic ladder clock in the AT Game. Note that the numerics confirms our derived analytical bounds.

\section{Discussion and Conclusions} \label{sec_discussion}

This work is motivated by the principle that time, rather than just being an external parameter in the  equations of motion, should be defined operationally as the quantity we can read off a clock. The principle entails that  measures for the accuracy of clocks must not depend on an external time parameter, either. Indeed, having access to such a parameter would  be akin to using a perfectly accurate clock as a reference. Conversely, the score~$N$ achievable in the AT~Game, which we considered here, is defined without any reference to parametric time.  Note that this operational viewpoint is also imposed on us when we talk about the accuracy of the best available clocks, e.g.,  atomic clocks~\cite{AtomicClocks,Wineland,WinelandFrequencyStandards}. Since there do not exist more accurate clocks, we have no reference to test against  --- but we can of course still count  how many ticks two such  clocks  emit until they run out of synchronisation.

The performance of a clock in the AT~Game is sometimes difficult to determine analytically. One may thus try to approximate it by measures with more mathematical structure. One such possible measure is the so-called \emph{$R$-value}, defined as $R=\smash{\frac{\mu^2}{\sigma^2}}$, where $\mu$ is the expected length of the time interval between two ticks and $\sigma$ its standard deviation. The drawback of this measure is that it uses the parametric time explicitly. Intuitively, one may think of it as a measure for how many ticks it takes until the standard deviation in  the (parametric) time of a tick is equally large as the average time interval between two ticks. This interpretation  suggests that the $R$-value of a clock can related to the AT~Game score~$N$. In work simultaneous to this one~\cite{RPaper}, further evidence for the close interrelation between $R$ and $N$ is provided. In particular, it has been shown that $R$ is upper bounded by a function linear in the size~$d$ for stochastic clocks and that it equals $d$ for the first tick of stochastic ladder clocks.  

The precise relation between the AT~Game performance~$N$ and the $R$-value for general clocks is however unknown. Given the above considerations, a natural conjecture would be that they depend linearly on each other. This would have interesting consequences. In~\cite{RPaper}, a construction of a quantum clock has been presented whose $R$-value grows almost quadratically in its size~$d$, when $d\rightarrow \infty$. If the conjecture was true, this would imply that quantum clocks can outperform the best stochastic clocks in the AT~Game. 
             
We also note that the operational accuracy measure based on the AT~Game that we employed here may be generalised in various directions. For example, one could consider a variant of the AT~Game with more than two players and strengthen the condition of alternating ticks accordingly. Another possibility is to consider non-binary tick registers and demand that the emitted ticks all contain a different number.  It would also be interesting to investigate scenarios that include relativistic effects, like the ones described in {\it e.g.}~\cite{CaslavEsteban,LockFuentes}. Another, possibly quite far-reaching, project would be to apply the operational approach taken here to the correlations framework of~\cite{PageWootters,Wootters,Mott,Briggs,Brout,GiovannettiQuantumTime,ChiaraVlatko}. In particular, the clock systems employed within the latter may be replaced by finite-size clocks as considered here (see~\cite{PanticMasterThesis} for some initial steps in this direction).

   \section*{Acknowledgements}

The authors thank Ralph Silva for insightful discussions. We acknowledge support from the Swiss National Science Foundation (SNSF) via the National Centre of Competence in Research QSIT and project No.\ $200020_{-}165843$.  

\appendix  
\section{Proof of Lemma~\ref{Lemma1}}

   The map $\cM^{\mathrm{st.}}$ of Def.~\ref{def:StochasticMap} is a completely positive trace preserving map.
       \begin{proof}
       By Choi's theorem~\cite{ChoiCompletelyPositive}, it is enough to show that there is a Kraus representation for the map $\cM$.
       Define: 
       \begin{align}
        \nonumber&M_{im}=\sqrt{p_{im}}\ket{0}\bra{i}_C \otimes \ket{1}_T,\, i \neq 0, m \in \lbrace 0,...,i-1 \rbrace \\
        \nonumber&M_{im}= \sqrt{p_{im}}\ket{m}\bra{i}_C \otimes \ket{0}_T, \, m \in \lbrace i,...,d \rbrace
        \end{align}
        It can easily be seen that these operators are linear operators from $C$ to $CT$. Also 
        \begin{align}
        M_{im}^{\dagger}M_{im}=p_{im}\proj{i},\, \forall i,\, m
        \end{align}
         and
         \begin{align}
         \nonumber\sum\limits_{i=0}^d\sum\limits_{m=0}^d M_{im}^{\dagger}M_{im}=\sum\limits_{i=0}^d(\sum\limits_{m=0}^d p_{im})\proj{i}=\sum\limits_{i=0}^d\proj{i}=\id_C
         \end{align}
          Hence they represent Kraus operators and we can easily see that $$\cM^{\mathrm{st.}}(\rho_C)=\sum\limits_{i=0}^d\sum\limits_{m=0}^dM_{im}\rho_C M_{im}^{\dagger}$$ is indeed a CPTP map.\\
       \end{proof}

  \section{Proof of Lemma~\ref{Lemma2}}
        The stochastic clock $(\rho_C^0,\cM^{\mathrm{st.}})$, with $\rho_C^0=\sum_j q_j\proj{j},\,j\in\lbrace 0,...,d\rbrace$ and $\cM$ of Def.~\ref{def:StochasticMap}, is $\eps$-continuous iff $\max\limits_{i \in \lbrace 0,...,d\rbrace}(1-p_{ii})\leq\eps$.
      
        \begin{proof}
        First note that since we are considering that random variable denoting the position of a stochastic clock performs a homogenous Markov chain, we have that $p_{ii}$ is a constant for all $i \in\lbrace 0,...,d \rbrace$. We will now use semi-definite programming described above to obtain an upper bound on $\bigl\| \tr_T\cM^{\mathrm{st.}}-\id_C \bigr\|_{\diamond}= \bigl\| \Phi \bigr\|_{\diamond}$ which will also determine possible values of $\eps$ from Def.~\ref{def:ContClock}.\\
        As a first step we need to calculate Choi-Jamio\l{}kowski state of $\Phi=\tr_T\cM^{\mathrm{st.}}-\id_C:C\rightarrow C$ which is, assuming $C'$ to be a copy space of $C$, $J(\Phi)=\sum\limits_{i,j}\Phi(\ket{i}\bra{j})\otimes \ket{i}\bra{j}_{C'}$. Note that         
         \begin{equation}
          \Phi(\ket{i}\bra{j}) = \left\{\begin{array}{ll}
            -\ket{i}\bra{j} &\mbox{for } i\neq j\\[1em]
            -\proj{i}+\sum\limits_{m=i}^d p_{im} \proj{m} +\\[1em]
            \sum\limits_{m=0}^{i-1} p_{im} \proj{0}&\mbox{for } i=j \geq 1 \\[1em]
            -\proj{0}+\sum\limits_{m=0}^{d} p_{0m} \proj{m}&\mbox{for } i=j=0
          \end{array}\right.
      \end{equation}
hence we have:
\begin{align}
\nonumber&J(\Phi)= \\
\nonumber&-\sum\limits_{i=0}^d\ket{i}\bra{i}\otimes\proj{i}_{C'}+\sum\limits_{i=0}^d\sum\limits_{m=i}^d p_{im} \proj{m}\otimes \proj{i}_{C'}\\
&+\sum\limits_{i=1}^d\sum\limits_{m=0}^{i-1} p_{im} \proj{0}\otimes \proj{i}_{C'}-\sum\limits_{i\neq j}\ket{i}_C\bra{j}\otimes\ket{i}_{C'}\bra{j}
 \end{align}       
 Now we need to find a dual feasible map $Z$ such that both $Z$ and $X=Z-J(\phi)$ are positive maps on $\cC \otimes \cC'$.
   Let $$X=\sum\limits_{i\neq j}\ket{i}\bra{j}_C \otimes\ket{i}\bra{j}_{C'}+\sum\limits_{i=0}^d (1-p_{ii}) \proj{i}\otimes \proj{i}$$ from which it follows 
   \begin{align}
   \nonumber&Z=\sum\limits_{m=i+1}^d\sum\limits_{i=0}^d p_{im} \proj{m}_C\otimes \proj{i}_{C'}\\
   &+\sum\limits_{m=0}^{i-1}\sum\limits_{i=1}^d p_{im} \proj{0}_C\otimes \proj{i}_{C'}
   \end{align} 
   Now let $\ket{\phi}=\sum_{i,j}z_{i,j}\ket{i}_C\otimes\ket{j}_{C'} \in C\otimes C'$. Then to show $X$ and $Z$ are positive we need to show that $\langle \phi | X | \phi \rangle \geq 0$ for any $\phi$ and correspondingly for $Z$. Calculating,
        \begin{align*}
            &\bra{\phi}X\ket{\phi}= \\
            &\sum_{i,j,k,l,m,n}z_{ij}^*z_{mn}\left(\bra{i}\otimes\bra{j}\right)\\
            &\left( \ket{k}\bra{l}\otimes\ket{k}\bra{l}+ (1-p_{kk}) \proj{k}\otimes \proj{k}\right)\left(\ket{m}\otimes\ket{n}\right)\\
             &=\sum_{k,l}z_{kk}^*z_{ll}+ \sum\limits_{i=0}^d (1-p_{ii}) |z_{ii}|^2\\
             &=\sum_{k<l}\left(z_{kk}^*z_{ll} + z_{kk}z_{ll}^*\right) + \sum_k \left|z_{kk}\right|^2+ \sum\limits_{i=0}^d (1-p_{ii}) |z_{ii}|^2\\
            &=\sum_{k<l}2\mathrm{Re}\left(z^*_{kk}z_{ll}\right) + \sum_k \left(\mathrm{Re}^2(z_{kk}) + \mathrm{Im}^2(z_{kk})\right) \\
            &+\sum\limits_{i=0}^d (1-p_{ii}) |z_{ii}|^2
          \end{align*}
        
        Now, notice that $\mathrm{Re}(z_{kk}^*z_{ll}) = \mathrm{Re}(z_{kk})\mathrm{Re}(z_{ll}) + \mathrm{Im}(z_{kk})\mathrm{Im}(z_{ll})$ and the sums can be written as:
       
   $$\langle \phi | X | \phi \rangle = (\sum\limits_{i=0}^d \mathrm{Re}{z_{ii}})^2 + (\sum\limits_{i=0}^d \mathrm{Im}{z_{ii}})^2 + \sum\limits_{i=0}^d (1-p_{ii}) |z_{ii}|^2 \geq 0$$
   
   since all of the terms are non-negative. Also
        \begin{align}
   \nonumber\langle \phi | Z | \phi \rangle = \sum\limits_{i=0}^d\sum\limits_{m=i+1}^d  p_{im} |z_{mi}|^2  + \sum\limits_{i=1}^d\sum\limits_{m=0}^{i-1}  p_{im} |z_{0i}|^2 \geq 0
      \end{align}
   
   Further, we can see that
   \begin{align}
   \nonumber&\tr_C Z=\sum\limits_{i=0}^d\sum\limits_{m=0}^d p_{im} \proj{i}_{C'} - \sum_{i=0}^d p_{ii} \proj{i}_{C'} \\
   &=\id_{C'}- \sum_{i=0}^d p_{ii} \proj{i}_{C'}
   \end{align}
   Hence $\tr_C Z$ is a diagonal matrix with elements $1-p_{ii}$ on the main diagonal.
   We can hence deduce that $||\tr_C Z||_{\infty}=\max\limits_{i=0,...,d} (1- p_{ii})=1-p_{00}$ without loss of generality.
   Since $||\tr_C Z||_{\infty}$ represents a bound for the  $\frac{1}{2}\bigl\| \Phi \bigr\|_{\diamond}$we see that our stochastic clock is $\eps$-continuous for all $\eps \geq (1-p_{00})$.
   Also, given an $\eps$ we can choose a probability distribution for our clock jumps to be such that $1-p_{ii} \leq \eps$ and we know that our clock will be $\eps$-continuous by Def.~\ref{def:ContClock}. 
        \end{proof}

 \section{Proof of Lemma~\ref{lem:NecessarySufficient}}
  
        Let Alice and Bob play the AT Game with two identical $d$-dimensional stochastic clocks A and B and suppose the game has ended after $n_\mathrm{end}\geq 1$ applications of map $\cM_{C \to CT}$. Then either A or B has to be at least $d$ states ahead at moment $n_\mathrm{end}$.

    \begin{proof}
    Let $\left(d,P^0_\mathrm{A},\{P^n_\mathrm{A}\}_{t\in\mathbb{N}}\right)$ and $\left(d,P^0_\mathrm{B},\{P^n_\mathrm{B}\}_{t\in\mathbb{N}}\right)$ be the classical dynamical descriptions of the clocks and let $\left(d,Q^0,\{Q^n\}_{n\in\mathbb{N}}\right)$ be the classical game description. Assume that it is possible for the AT Game to end without any of the clocks being $1$ time ahead.
    
    First, consider two special cases: 1) The game ends before B ticks for the first time: In this case, $P_\mathrm{A}^{n_\mathrm{end}} = 2d-P_\mathrm{A}^0$ and $P_\mathrm{B}^{n_\mathrm{end}}\leq d-P_\mathrm{B}^0 - 1$, and therefore $Q^{n_\mathrm{end}} \geq d+1$, i.e. A is $d+1$ states ahead. 2) The game ends before A ticks for the first time: Here, $P_\mathrm{B}^{n_\mathrm{end}} = d-P_\mathrm{B}^0$ and $P_\mathrm{A}^{n_\mathrm{end}} \leq d - P_\mathrm{A}^0 - 1$, and therefore $Q^{n_\mathrm{end}}\leq -1$, i.e. B is $d+1$ states ahead.
    
    Next, assume that each clock has ticked at least once. Let $\bar{n}_\mathrm{A}, \bar{n}_\mathrm{B} < n_\mathrm{end}$ be the times at which clocks A and B produce their last \textit{alternate} tick, respectively. Now, if A produces the final (\textit{non-alternate}) tick that ends the game at moment $n_\mathrm{end}$, then $P^{n_\mathrm{end}}_\mathrm{A} = P^{\bar{n}_\mathrm{A}}_\mathrm{A} + d$ and $P^{n_\mathrm{end}}_\mathrm{B}\leq P^{\bar{n}_\mathrm{B}}_\mathrm{B} + d-1$. Remind yourself that a clock generates its $k$th tick at the moment corresponding to the smallest $n$ for which $P^n = kd-P^0$. Since A produced the first tick by convention and the last tick by assumption, we know that $P_\mathrm{A}^{\bar{n}_\mathrm{A}} = kd-P_\mathrm{A}^0$ and $P_\mathrm{B}^{\bar{n}_\mathrm{B}} = (k-1)d-P_\mathrm{B}^0$ for some $k\geq2$ alternate ticks. Solving one of the equations for $k$ and plugging it into the other one yields: $P^{\bar{n}_\mathrm{A}}_\mathrm{A} = P^{\bar{t}_\mathrm{B}}_\mathrm{B} + (d-Q^0)$ and thus $Q^{n_\mathrm{end}}=P^{n_\mathrm{end}}_\mathrm{A}-P^{n_\mathrm{end}}_\mathrm{B}+Q^0\geq d+1$. But by definition, this means that A is $d+1$ states ahead.
    
    Similarly, if B produces the final tick that ends the game, then $P^{n_\mathrm{end}}_\mathrm{B} = P^{\bar{n}_\mathrm{B}}_\mathrm{B} + d$ and $P^{n_\mathrm{end}}_\mathrm{A}\leq P^{\bar{n}_\mathrm{A}}_\mathrm{A} + d-1$. But following an analogous argument as above, we find $P^{\bar{n}_\mathrm{B}}_\mathrm{B}=P^{\bar{n}_\mathrm{A}}_\mathrm{A}+Q^0$, so that $Q^{n_\mathrm{end}} = P^{n_\mathrm{end}}_\mathrm{A}-P^{n_\mathrm{end}}_\mathrm{B}+Q^0 \leq -1$ and B is at least $d+1$ states ahead.
    \end{proof}

    Let Alice and Bob play the AT Game with identical $d$-dimensional stochastic clocks A and B of dimension $d$ and suppose that one of the clocks is $2$ times ahead (at least $2d+1$ steps ahead) at some moment $n\in\mathbb{N}$. Then there exists $n_\mathrm{end}\leq n$, for which the Alternate Ticks Game has ended.

       \begin{proof}
    Let again $\left(d,P^0_\mathrm{A},\{P^n_\mathrm{A}\}_{n\in\mathbb{N}}\right)$ and $\left(d,P^0_\mathrm{B},\{P^n_\mathrm{B}\}_{n\in\mathbb{N}}\right)$ be the classical dynamical descriptions of the clocks and let $\left(d, Q^0,\{Q^n\}_{n\in\mathbb{N}}\right)$ be their classical game description. Notice that for a given classical dynamical description $\left(d,P^0,\{P^n\}_{n\in\mathbb{N}}\right)$ for a stochastic clock, for any moment $n$, we have seen that the number of alternate ticks is given by 
    $k_n= \left\lfloor\frac{1}{d}\left(P^n+P^0\right)\right\rfloor$.

    Now, assume first, that A is $2$ times ahead at moment $n$. Then $Q^n \geq 2d + 1$, i.e. $P_\mathrm{A}^n+P_\mathrm{A}^0\geq P_\mathrm{B}^n + P_\mathrm{B}^0 + 2d + 1$. From the above equation, it follows that A has produced at least two more ticks than B. But by the \textit{pigeon hole principle}, this immediately implies that at least two ticks of A must have occurred \textit{non-alternately}, i.e. without B ticking in between. Therefore, the game must have ended at some earlier moment $t_\mathrm{end}\leq n$.
    
    Similarly, notice that if B is $2$ times ahead at some moment $n$, it holds that $Q^n\leq -d-1$, so that $P_\mathrm{B}^n+P_\mathrm{B}^0 \geq P_\mathrm{A}^n+P_\mathrm{A}^0 +d +1$, i.e. B ticked more often than A. By definition, A must have produced the first tick (otherwise the game has already ended and we are done). But that again means that B must have ticked $2$ times more than A after the first alternate tick. By the \textit{pigeon hole principle} two ticks by B must have violated the alternating order and the game ended at some $n_\mathrm{end}\leq n$.
    \end{proof}
    
     \section{Proof of Theorem~\ref{thm:GeneralClocksBound}}
    
             Assume that the AT Game is played between two identical stochastic clocks. Then following bounds hold on the number of alternate ticks:
             \begin{equation}
             \frac{(d+4+\frac{4}{d})p_{i,i+1}}{4m^2\eps(2-\eps)} \leq N \leq \frac{\frac{9d}{4}+3+\frac{1}{d}}{2(1-\eps)mp_{0,m}}
             \end{equation}
       
      \begin{proof}       We can consider homogenous equation from Eq.~\ref{eq:recurrenceGen} (which excludes the constant term $1$ on the RHS). One can notice that $A+Bz$ will be the solution. Namely inserting $D_z^{\mathrm{hom}}=A+Bz$ to the homogenous part of Eq.~\ref{eq:recurrenceGen}, we obtain:
              \begin{equation}
              \label{eq:recurrenceHom}
              \begin{split}
              &(2\sum\limits_{k=1}^m p_k)(A+Bz) = \\
              & p_{1}(A+Bz+B)+...+p_m(A+Bz+mB)+ \\
             & p_{1}(A+Bz-B)+...+p_{m}(A+Bz-mB)\\
             &=(2\sum\limits_{k=1}^m p_k)A+(2\sum\limits_{k=1}^m p_k)Bz
             \end{split}
             \end{equation}
             Note that this is the case because of the symmetry of the particular problem.
             Hence we are left to find a particular solution. After considering a quadratic function in $z^2$, we obtain following solution:
             $$D_z^{\mathrm{part}}=-\frac{z^2}{2\sum\limits_{k=1}^m k^2p_k}$$
             Hence the total solution is $D_z=D_z^{\mathrm{hom}}+D_z^{\mathrm{part}}=A+Bz-\frac{z^2}{2\sum\limits_{k=1}^m k^2p_k}$. \\
             We can now impose boundary conditions (recall: $D_{-1}=D_{d+1}=0$ for the lower bound, and $D_{-d-1}=D_{2d+1}=0$ for the upper bound on the game). For the lower bound we have
             \begin{equation}
              A=\frac{d+1}{2\sum\limits_{k=1}^m k^2p_k},\, B=\frac{d}{2\sum\limits_{k=1}^m k^2p_k}
              \end{equation}
               This leads to
             \begin{equation}
             D_z^{\mathrm{lower}}=\frac{d+1+dz-z^2}{2\sum\limits_{k=1}^m k^2p_k}
             \end{equation}
             For the upper bound we obtain 
             \begin{equation}
             A=\frac{(d+1)(2d+1)}{2\sum\limits_{k=1}^m k^2p_k},\,B=\frac{d}{2\sum\limits_{k=1}^m k^2p_k}
             \end{equation}
             and
             \begin{equation}
             D_z^{\mathrm{upper}}=\frac{2d^2+3d+1+dz-z^2}{2\sum\limits_{k=1}^m k^2p_k}
             \end{equation}

             Now recall that $\mathbb{E}(S)=\mathbb{E}(Y\cdot T)=\mathbb{E}(Y)\cdot \mathbb{E}(T)$, where $T$ is the number of ticks and $\mathbb{E}(S)=D_z$ for initial distance between clock states $Q^0=z$. We will have $D_z$ maximised for $z=d/2$  as expected for identical clocks (they start from opposite states on the circle). Note that $\mathbb{E}(Y)$ is not trivial to calculate as it depends on the probability distribution of the particular clock. \\
             Inputing $z=\frac{d}{2}$, we have:
             \begin{equation}
            \frac{d^2/4+d+1}{2\sum\limits_{k=1}^m k^2p_k} \leq D_z \leq \frac{9d^2/4+3d+1}{2\sum\limits_{k=1}^m k^2p_k}
            \end{equation}

              Now write $\mathbb{E}(Y)=\sum yP(Y=y)$. We can restrict to the case when clock can not jump more than $\left\lfloor \frac{d}{s} \right\rfloor=m\geq 1$ places in one step. This is the same $m$ as the maximum number of place differences players can achieve in one step. Note also that the average number of steps each clock needs until the tick is reached is then lower bounded by $s$, $\mathbb{E}(Y)\geq s$. Hence, expected number of ticks can be upper bounded as follows (note that $p_m > 0$ by assumption):
      \begin{align}
      \nonumber&N=\mathbb{E}(T) \leq \frac{D^{upper}_{d/2}}{\mathbb{E}(Y)}\leq \frac{\frac{9d^2}{4}+3d+1}{2s\sum\limits_{k=1}^m k^2p_k}\\
      \nonumber& \leq \frac{\frac{9d^2}{4}+3d+1}{2s m^2p_m}=\frac{\frac{9d}{4}+3+\frac{1}{d}}{2mp_m}\\
      &\leq \frac{\frac{9d}{4}+3+\frac{1}{d}}{2(1-\eps)mp_{0,m}}
      \end{align}
   using $1-p_{ii}\leq \eps$, $p_m \geq p_{ii}^A p_{0,m}^B \geq (1-\eps) p_{0,m}$, $s m =d$ and $m\geq 1$. Note that $p_{ii}$ is a constant for any $i$ and same for both clocks. We can see from the above result that for $mp_{0,m}$ small, we have a better linear bound on the number of ticks, and for $mp_{0,m}$ approaching $d$ we have a very bad constant bound. This suggest that the `ladder' clock is one of the best stochastic clocks one can build.\\
   
    Note also that the average number of steps until the tick of one clock, $\mathbb{E}(Y)$ is upper bounded by $\frac{d}{p_{i,i+1}}$, where $p_{i,i+1}$ is also a constant in $i$. Hence we have:
 \begin{align}
 \nonumber N=\mathbb{E}(T) \geq \frac{(d+1+\frac{d^2}{4})p_{i,i+1}}{2d\sum\limits_{k=1}^m k^2p_k}=\\
 \nonumber d\frac{p_{i,i+1}}{8\sum\limits_{k=1}^m k^2p_k}+
 \frac{p_{i,i+1}}{2\sum\limits_{k=1}^m k^2p_k}+
 \frac{p_{i,i+1}}{2d\sum\limits_{k=1}^m k^2p_k}\\
 \nonumber \geq \frac{(d+4+\frac{4}{d})p_{i,i+1}}{8m^2\sum\limits_{k=1}^m p_k}
 \geq \frac{(d+4+\frac{4}{d})p_{i,i+1}}{4m^2(1-p_0)}\\
 \geq \frac{(d+4+\frac{4}{d})p_{i,i+1}}{4m^2\eps(2-\eps)}
 \end{align}
using the fact that for two identical clocks playing the game, $2\sum\limits_{k=1}^m p_k=1-p_0$, and $$1-p_0 \leq 1- p_{ii}^A p_{kk}^B\leq1-(1-\eps)^2=\eps(2-\eps)$$ where $i$ and $k$ are positions of players A and B. Hence to maximise lower bound one needs $m^2\eps$ to be small, which again is minimised for a ladder clock (where $m=1$ with small $\eps$. 
    \end{proof}
    
    \section{Perfect stochastic clocks in the AT Game}
    Let A be a perfect ladder clock  and B be any stochastic clock B, and assume that  A is at least one position ahead of  B at the start. The difference equation obtained when the perfect clock plays the AT Game against  B is of the same form as Eq. \ref{eq:recurrence}, but with different probability distribution. Namely, one obtains:
              \begin{equation}
              \begin{split}
             &\mathbb{E}(S|Q^0=z)=D_z= p_{1}(D_{z+1}+1)+p_2(D_{z+2}+1)\\
             &+...+p_m(D_{z+m}+1)+ p_0(D_z+1)+p_{-1}(D_{z-1}+1)\\
             &+p_{-2}(D_{z-2}+1)+...+p_{-m}(D_{z-m}+1)=\\
             &q_{0}(D_{z}+1)+q_1(D_{z+1}+1)\\
             &+...+q_{-m}(D_{z+m+1}+1)+ q_2(D_{z-1}+1)\\
             &+...+q_{m}(D_{z-m+1}+1)
             \end{split}
             \end{equation}
where $q_k,\,k\in\lbrace -m,...,m \rbrace$ is the probability that clock B moves for $k$ states in one step.

However, note that in the case of different clocks playing, the obtained difference equation is not symmetric. Hence one can check that the particular solution as before (quadratic function in $z$) will not work this time. The homogenous solution is still $D_z^{\mathrm{hom}}=A+Bz$ (due to the fact that $\sum\limits_{k=i}^{i+m+1} p_{k,i}=1$).

  Note that once we have a probability distribution of a specific clock we can easily obtain average number of alternate ticks, since $\mathbb{E}(Y)=d$ in this case (due to the property of the perfect ladder clock).
  If perfect ladder clock would play against ordinary ladder clock, the solution of the differential equation becomes $D_z=D_0-\frac{z}{1-\delta},\,z\geq 0$. Note that since we assume that the perfect clock has an advantage at the beginning, it is not possible that the ladder clock overtakes it, hence $D_z=0$ for $z<0$ always. Boundary conditions for the lower bound then becomes $D_{d+1}=0$, which leads to $D_z^{\mathrm{lower}}=\frac{d+1-z}{1-\delta}$ and for upper bound $D_{2d+1}=0$, hence $D_z^{\mathrm{upper}}=\frac{2d+1-z}{1-\delta}$. Note that results are minimised for starting condition $z=0$. Hence
  \begin{equation}
  \frac{d+1}{1-\delta}\leq D_z^{\mathrm{ideal\, vs\, ladder}}\leq\frac{2d+1}{1-\delta}
  \end{equation}
    
\bibliographystyle{unsrt}

\fontsize{3pt}{1pt}\selectfont
\bibliography{refInflation}

       \end{document}